\newtheorem{theorem}{Theorem}
\renewcommand{\vec}[1]{\mathbf{#1}}
\def\blfootnote{\xdef\@thefnmark{}\@footnotetext}
\begin{document}
	
	\title{\LARGE{RIS-Aided Communications over Dirty MAC:\\ Capacity Region and Outage Probability}} 
	\author{Farshad~Rostami~Ghadi\IEEEmembership{} and  F.~Javier~L\'opez-Mart\'inez\IEEEmembership{}}
	\maketitle
	\begin{abstract}
	We characterize the capacity region of a two-user multiple access channel (MAC) in a reconfigurable intelligent surface (RIS)-aided communication system with side information (SI) at the transmitters. We consider two uplink communication scenarios: (i) a double dirty MAC where the interferences are known non-causally to both users; and (ii) a single dirty MAC model where only one of the users knows the interference and the other one is not aware of the interference. Considering that each of the users is assisted by a different RIS with $M_i$ elements, we derive closed-form expressions for the capacity region and the outage probability (OP) for both system models. Results show that the use of RISs is beneficial to extend the capacity region and improve the OP in both scenarios.

	\end{abstract}
	\begin{IEEEkeywords}
		Reconfigurable intelligent surface, multiple access channels, side information, capacity region, outage probability.
	\end{IEEEkeywords}
	\maketitle
	\blfootnote{\noindent Manuscript received XX August, 2022. This work was funded in part by Junta de Andaluc\'ia, the European Union and the European Fund for Regional Development FEDER through grants P18-RT-3175 and EMERGIA20-00297, and in part by MCIN/AEI/10.13039/501100011033 through grant PID2020-118139RB-I00. The review of this paper was coordinated by XXXX.
	}
	%
	
	\blfootnote{\noindent The authors are with the Communications and Signal Processing Lab, Telecommunication Research Institute (TELMA), Universidad de M{\'a}laga, M{\'a}laga, 29010, (Spain). F.J. L{\'o}pez Mart{\'i}nez is also with the Dept. Signal Theory, Networking and Communications, University of Granada, 18071, Granada (Spain). (e-mail: $\rm farshad@ic.uma.es$, $\rm fjlm@ugr.es$).}
	
	\blfootnote{Digital Object Identifier 10.1109/XXX.2021.XXXXXXX}
	\vspace{-3mm}
	\section{Introduction}\label{introduction}
	Reconfigurable intelligent surfaces (RISs) have been recently introduced as a promising approach for future wireless communication networks due to their capability to enhance coverage area and provide high spectral/energy efficiency \cite{basar2019wireless}. 
	 Specifically, an RIS is an artificial metasurface with a large number of low-cost passive reflecting elements, where each element can independently introduce a phase shift on the reflected signal. By smartly adjusting the phase shifts induced by all the reflecting elements, the RIS can adaptively control the wireless signal propagation environment to maximize the desired signal quality. Therefore, the achievable rate and reliability of the corresponding transmission, which are momentous challenges in designing sixth-generation (6G) technology, are improved. On the other hand, the use of side information (SI) at the transmitter has the potential to help meet the reliability constraint in future wireless networks, since such knowledge (e.g., either channel state information (CSI) or interference awareness) can be leveraged to intelligently encode their information. Considering SI at the transmitters in multi-user systems can reduce the destructive effects of the interference and provide reliable communication at higher rates \cite{jafar2006capacity,philosof2011lattice,ghadi2021role}. Hence, investigation of the RIS-aided communication systems in the presence of SI at transmitters can be worthy of attention.
	 
	 In the last years, intense research activities have been mostly focused on the phase shift design and performance analysis of RIS in various single-user \cite{ye2020joint,cui2019secure,ai2021secure,yang2020secrecy} and multi-user \cite{mu2021joint,zhou2020intelligent,wei2021channel,gan2021ris,xiao2021average,liu2021dynamic} scenarios. On the contrary, information-theoretic aspects of RIS communications have been scarcely addressed in the literature. In this regard, the capacity characterization for RIS-aided multiple-input multiple-output (MIMO) systems in a single-user setup was investigated in \cite{zhang2020capacity1,karasik2020beyond}, while the \textit{capacity region} for more complex RIS-aided multi-user systems was only studied in \cite{zhang2021intelligent}. The authors in \cite{zhang2021intelligent} considered a two-user \textit{clean}\footnote{i.e., without side information at the transmitters} multiple access channel (MAC), where users send independent messages to a common receiver aided by reflecting elements of RIS. By proposing two strategies for RIS location, they provided capacity and achievable rate regions with practical orthogonal multiple access (OMA) schemes such as time-division multiple access (TDMA) and frequency-division multiple access (FDMA), and showing that RISs significantly improve the system performance. 

To the best of the authors' knowledge, the role of SI in RIS-assisted multiple access communications remains an open challenge. For this purpose and motivated by the potential of RIS and SI approaches in providing more reliable communications for the next generation of wireless networks, we extend the RIS-aided two-user clean MAC considered in \cite{zhang2021intelligent} to the RIS-aided two-user \textit{dirty} MAC model by assuming known SI at the transmitters. In particular, we consider two uplink communication scenarios under RIS deployment, i.e., a doubly dirty MAC where the interferences are known non-causally to both users; and a single dirty MAC model where only one of the users knows the interference and the other one is not aware of the interference. By considering two separate RIS with $M_i$ reflecting elements near to each user, we derive the capacity region for both doubly and single dirty MAC models, and derive closed-form expressions for the outage probability.\vspace{-0.4cm}
	\section{System Model}\label{sec-model}
\subsection{Doubly dirty MAC}
We consider a wireless two-user MAC with 
known interferences $S_i$, $i\in\{1,2\}$ 
as shown in Fig. \ref{system-model}(a), where two single-antenna users $u_i$ that are sufficiently far apart from each other aim to send independent inputs $X_i$ to a common single-antenna receiver $r$, respectively. We assume that the interference signals $S_i$ with variances $Q_i$ ($S_i\sim\mathcal{N}(0,Q_i)$) are known non-causally to the transmitters of users $u_i$, respectively, and the inputs $X_i$ sent by users $u_i$ over the channels are subjected to the average power constraint as $\mathbb{E}\left[|X_i|^2\right]\leq P_i$, respectively. Furthermore, in order to improve communication rates, we consider two RISs that contain $M_i\ge 1$ passive reflecting elements in the vicinity of user $u_i$. Each RIS element is designed to be able to induce an independent phase shift to the incident signal for collaboratively altering the effective channels between the users and the receiver. Given the assumption that the users are sufficiently far apart, we suppose that the signal transmitted by user $u_i$ and reflected by the serving RIS to user $u_j$, $i\neq j$, is negligible at the receiver $r$ due to the severe path-loss. Therefore, the equivalent channel observed by receiver $r$ from the $i$-th user includes the direct link from user $u_i$ and the reflected link by its serving RIS. We also assume perfect CSI availability so that RISs phase shifts are optimally configured. Under such assumptions, the received signal $Y$ at the receiver $r$ can be defined as:
 \begin{align}
 	Y&=\sum_{i=1}^{2}\left[\tilde{h}_i(\vec{\Phi}_i)X_i+S_i\right]+Z,\\
 	&=\sum_{i=1}^{2}\left[\left(\hat{h}_i+\vec{g}^T_i\vec{\Phi}_i\vec{h}_i\right)X_i+S_i\right]+Z,\label{eq-y1}
 \end{align}
where $\tilde{h}_i\left(\vec{\Phi}_i\right)$ and $\hat{h}_i\in\mathbb{R}$ denote the effective channel and the baseband direct channel from user $u_i$ to the receiver $r$, respectively. The terms $Z$ defines the Additive White Gaussian Noise
(AWGN) with zero mean and variance $N (Z\sim\mathcal{CN}(0,N))$ at the receiver $r$ and $\vec{\Phi}_i\in\mathbb{R}^{M_i\times M_i}$ is the adjustable response induced by the $m$-th reflecting meta-surface of the $i$-th RIS. For the sake of simplicity, we assume that the RISs do not induce attenuation to the signal i.e., $|\phi_{im}|=1, \forall m\in \mathcal{M}_i\in\left\{1,...,M_i\right\}$, which is defined as:
\begin{align}
\vec{\Phi}_i=\text{diag}\left(\left[\mathrm{e}^{j\varphi_{i1}}, \mathrm{e}^{j\varphi_{i2}},...,\mathrm{e}^{j\varphi_{iM_i}}\right]\right).\label{phii}
\end{align}
The vector $\vec{h}_i\in\mathbb{R}^{M_i\times 1}$ contains the channel gains from user $u_i$
to each element of its serving (nearby) RIS and vector  $\vec{g}^T_i\in\mathbb{R}^{1\times M_i}$ includes the channel gains form each element of its serving RIS to the common receiver $r$, which are given by
$
\vec{h}_i=\left[h_{i1}\mathrm{e}^{-j\theta_{i1}},h_{i2}\mathrm{e}^{-j\theta_{i2}},...,h_{iM_i}\mathrm{e}^{-j\theta_{iM_i}}\right]^T
$ and $
	\vec{g}_i=\left[g_{i1}\mathrm{e}^{-j\psi{i1}},g_{i2}\mathrm{e}^{-j\psi_{i2}},...,g_{iM_i}\mathrm{e}^{-j\psi_{iM_i}}\right]^T,
	$
where $h_{iM_i}$ and $g_{iM_i}$ are the amplitudes of the corresponding channel gains, and $\mathrm{e}^{-j\theta_{iM_i}}$ and $\mathrm{e}^{-j\psi_{iM_i}}$ denote the phase of the corresponding links.\vspace{-0.5cm}
\begin{figure}[!t]\hspace{-0.6cm}
	\centering
	\includegraphics[width=1.05\columnwidth]{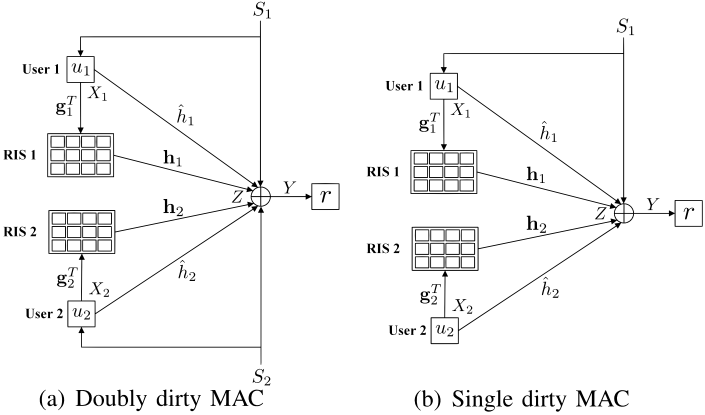} \vspace{-0.5cm}
	\caption{RIS-aided two-user multiple access communication systems.}\vspace{-0.6cm}
	\label{system-model}
\end{figure}
\subsection{Single dirty MAC}
Here, we consider a wireless MAC that includes a single dirty user and helper problem as shown in Fig. \ref{system-model}(b), where only user $u_1$ knows the interference $S_1$ and user $u_2$ has no awareness of the interference. In this asymmetric scenario, user $u_2$ aims to send the message to the receiver $r$ at its maximum rate with the help of user $u_1$, where user $u_1$ does not send any message in this case. Thus, for this considered model, the received signal $Y$ at the receiver $r$ can be expressed as:
 \begin{align}
	Y&=\sum_{i=1}^{2}\tilde{h}_i(\vec{\Phi}_i)X_i+S_1+Z,\\
	&=\sum_{i=1}^{2}\left(\hat{h}_i+\vec{g}^T_i\vec{\Phi}_i\vec{h}_i\right)X_i+S_1+Z.
\end{align}

\section{Capacity Region}\label{sec-capacity}
In this section, we characterize the capacity region of doubly/single dirty MAC in the presence of strong interferences under RIS consideration, which constitutes the achievable rate-pair $\left(R_1,R_2\right)$.\vspace{-0.4cm}
\subsection{Doubly Dirty MAC}
Note that under RIS reflection coefficients (i.e., phase shifts) $\{\vec{\Phi}_i\}$, the equivalent channels observed by receiver $r$ from users $u_i$ over doubly dirty MAC are determined by $\{\tilde{h}_i\left(\vec{\Phi}_i\right)\}$ as per \eqref{eq-y1}, and the instantaneous capacity region $C\left(\left\{\vec{\Phi}_i\right\}\right)$ of the two-user doubly dirty MAC with coherent receiver (fading coefficients $\{\tilde{h}_i\left(\vec{\Phi}_i\right)\}$ are known
at the receiver) under strong interference condition (i.e., $Q_i\rightarrow\infty$) is a triangle region \cite{philosof2011lattice} including rate pairs that satisfy the following constraint:
\begin{align}
 \hspace{-1ex}R_1+R_2\leq\log_2\left(1+\min\left\{\frac{P_1|\tilde{h}_1\left(\vec{\Phi}_1\right)|^2}{N},\frac{P_2|\tilde{h}_2\left(\vec{\Phi}_2\right)|^2}{N}\right\}\right),\label{region-dd-without}
\end{align}
where by flexibly modeling the RIS reflection coefficients $\{\vec{\Phi}_i\}$, any rate pair within the union set of $C^\mathsf{D}\left(\left\{\vec{\Phi}_i\right\}\right)$ over all possible $\{\vec{\Phi}_i\}$ can be
obtained. Therefore, the capacity region of doubly dirty MAC under strong interference scenario and RIS deployment is defined as the convex hull of such a union set:
\begin{align}
C^\mathsf{D}\overset{\Delta}{=}\text{Conv}\left(\bigcup\underset{\left\{\vec{\Phi}_i\right\}\in\mathcal{A}}{}C^\mathsf{D}\left(\left\{\vec{\Phi}_i\right\}\right)\right),\label{union}
\end{align}
where $\mathcal{A}\overset{\Delta}{=}\left\{\left\{\vec{\Phi}\right\}:|\phi_{im}|=1, \forall i, m\right\}$ is the feasible set of $\left\{\vec{\Phi}_i\right\}$. 

In order to determine the closed-form expression of the capacity region for the considered doubly dirty MAC, we first define the effective channel gain for
each user $u_i$ as follows: 
\begin{align}
|\tilde{h}_i\left(\vec{\Phi}_i\right)|&=|\hat{h}_i+\sum_{m=1}^{M_i}g_{im}\phi_{im}h_{im}|,\\ \label{heq1}
&=\Big|\hat{h}_i+\sum_{m=1}^{M_i}g_{im}h_{im}\mathrm{e}^{j\left(\varphi_{im}-\theta_{im}-\psi_{im}\right)}\Big|,\\\label{phi_i}
&\overset{(a)}{=}|\hat{h}_i|+\Big|\sum_{m=1}^{M_i}g_{im}h_{im}\Big|\overset{\Delta}{=}\tilde{h}_{i,\mathrm{V}},
\end{align}
where $(a)$ follows from assuming perfect CSI for RIS configuration, which enables ideal phase shifting (i.e., $\varphi_{im}=\theta_{im}+\psi_{im}$). Now, we present the following theorem that determine the closed-from capacity region of doubly dirty MAC under RIS deployment. 
\begin{theorem}\label{thm-dd}
The capacity region of the RIS-aided doubly dirty MAC under strong interference scenario is given by
\begin{align}
	C^\mathsf{D}=\left\{\left(R_1,R_2\right):R_1+R_2\leq R^{\mathsf{D}}_{1,2}\right\},\label{region-dd}
\end{align}
where 
$
R^{\mathsf{D}}_{1,2}\overset{\Delta}{=}\log_2 \left(1+\min\left\{\frac{P_1\tilde{h}_{1,\mathrm{V}}}{N},\frac{P_2\tilde{h}_{2,\mathrm{V}}}{N}\right\}\right).
$
\end{theorem}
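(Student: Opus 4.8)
The plan is to collapse the convex-hull construction in \eqref{union} to one phase-optimization per user. First I would fix an arbitrary feasible $\{\vec{\Phi}_i\}\in\mathcal{A}$ and read off from \eqref{region-dd-without} that the instantaneous region $C^\mathsf{D}(\{\vec{\Phi}_i\})$ is the right triangle with legs on the nonnegative axes and hypotenuse $R_1+R_2 = R^{\mathsf{D}}_{1,2}(\{\vec{\Phi}_i\})$, where $R^{\mathsf{D}}_{1,2}(\{\vec{\Phi}_i\})\overset{\Delta}{=}\log_2\left(1+\min\left\{P_1|\tilde{h}_1(\vec{\Phi}_1)|^2/N,\,P_2|\tilde{h}_2(\vec{\Phi}_2)|^2/N\right\}\right)$. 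Since all of these triangles share the same two edges along the axes, the family $\{C^\mathsf{D}(\{\vec{\Phi}_i\})\}$ is totally ordered by set inclusion, with $C^\mathsf{D}(\{\vec{\Phi}_i\})\subseteq C^\mathsf{D}(\{\vec{\Phi}_i'\})$ exactly when $R^{\mathsf{D}}_{1,2}(\{\vec{\Phi}_i\})\le R^{\mathsf{D}}_{1,2}(\{\vec{\Phi}_i'\})$. Hence the union in \eqref{union} equals the single triangle attached to the largest value of $R^{\mathsf{D}}_{1,2}(\cdot)$, this maximum being attained because $\mathcal{A}$ is compact and the objective continuous; a triangle is already convex, so taking the convex hull changes nothing. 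This reduces the statement to showing $\max_{\{\vec{\Phi}_i\}\in\mathcal{A}} R^{\mathsf{D}}_{1,2}(\{\vec{\Phi}_i\}) = R^{\mathsf{D}}_{1,2}$.

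Next I would evaluate that maximum. The key structural observation is that $\mathcal{A}$ factorizes as an independent product over the two RISs: $\vec{\Phi}_1$ and $\vec{\Phi}_2$ may be chosen separately, and $\vec{\Phi}_1$ (resp.\ $\vec{\Phi}_2$) enters the objective only through $|\tilde{h}_1(\vec{\Phi}_1)|$ (resp.\ $|\tilde{h}_2(\vec{\Phi}_2)|$). Since $t\mapsto\log_2(1+t)$ is increasing and $(a_1,a_2)\mapsto\min\{a_1,a_2\}$ is nondecreasing in each coordinate, the joint maximization decouples into the two per-user problems $\max_{\vec{\Phi}_i}|\tilde{h}_i(\vec{\Phi}_i)|^2$; each is solved at the phase-aligned configuration of \eqref{phi_i}, where the triangle inequality yields $|\tilde{h}_i(\vec{\Phi}_i)| = |\hat{h}_i+\sum_{m=1}^{M_i} g_{im}h_{im}\mathrm{e}^{j(\varphi_{im}-\theta_{im}-\psi_{im})}| \le |\hat{h}_i|+\sum_{m=1}^{M_i} g_{im}h_{im} = \tilde{h}_{i,\mathrm{V}}$, with equality for the admissible choice $\varphi_{im}=\theta_{im}+\psi_{im}$ (which preserves $|\phi_{im}|=1$, so it lies in $\mathcal{A}$). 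Inserting these per-user optima into $R^{\mathsf{D}}_{1,2}(\cdot)$ gives the claimed $R^{\mathsf{D}}_{1,2}$ of \eqref{region-dd}.

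I do not anticipate a genuinely deep obstacle; the work lies in making the two reductions precise. The first delicate point is the claim that the convex hull of the (uncountable) union of triangles is again the largest triangle among them, which relies on the total ordering by inclusion, itself a consequence of every $C^\mathsf{D}(\{\vec{\Phi}_i\})$ sharing the same two axis-edges. The second is the interchange of $\max$ with $\min$, which is legitimate only because the feasible set is a product set over the two independent RISs -- a modeling consequence of the users being far apart so that each RIS acts on a disjoint pair of links, as assumed in Section~\ref{sec-model} -- together with the monotonicity of $\min$ and of $\log_2(1+t)$. Once both are established, the per-user phase-alignment identity \eqref{phi_i}, already proved in the system-model section, completes the argument.
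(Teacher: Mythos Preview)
Your proposal is correct and follows essentially the same achievability-plus-outer-bound logic as the paper's proof: the phase-aligned choice in \eqref{phi_i} gives achievability of the region in \eqref{region-dd}, and the triangle-inequality upper bound makes \eqref{region-dd} a convex outer bound for every $C^\mathsf{D}(\{\vec{\Phi}_i\})$. You simply spell out more carefully what the paper compresses into one sentence --- the total ordering of the triangles by inclusion, the product structure of $\mathcal{A}$, and the monotonicity that lets the joint maximum decouple into the two per-user phase-alignment problems.
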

\begin{proof}\label{proof-th1}
By noting that the capacity region in \eqref{region-dd} is an achievable rate region with $\left\{\vec{\Phi}_i\right\}$ in \eqref{phi_i} and also provides a convex outer bound for all achievable $C^\mathsf{D}\left(\left\{\vec{\Phi}_i\right\}\right)$ in \eqref{region-dd-without}, the proof is completed. \vspace{-0.5cm}
\end{proof}
\subsection{Single Dirty MAC}
Here, we provide the closed-form capacity region of single dirty MAC under RIS deployment. It is worth noting that since we consider the strong interference scenario, the MAC with a single (strong) dirty user is \textit{not} an specific case of the doubly dirty MAC with strong interferences \cite{philosof2011lattice}. By considering RIS reflection coefficient $\{\vec{\Phi}_i\}$, the effective channel observed by receiver $r$ from user $u_2$ over single dirty MAC is given by $\tilde{h}_2\left(\vec{\Phi}_2\right)$, and the instantaneous capacity region $C^\mathsf{S}\left(\{\vec{\Phi}_i\}\right)$ of the single dirty MAC with coherent receiver under strong interference case (i.e., $Q_1\rightarrow \infty$) is a triangle/quadrilateral region consisting of rate pairs that satisfy the following constraints: 
\begin{align}
	\hspace{-2.4ex}R_2\leq\log_2\left(1+\min\left\{\frac{P_1|\tilde{h}_1\left(\vec{\Phi}_1\right)|^2}{N},\frac{P_2|\tilde{h}_2\left(\vec{\Phi}_2\right)|^2}{N}\right\}\right),\label{rate1}
\end{align}\vspace{-0.4cm}
\begin{align}
	R_2+R_2\leq\log_2\left(\frac{P_1|\tilde{h}_1\left(\vec{\Phi}_1\right)|^2}{N}\right),
\label{rate2}
\end{align}
where similar to the doubly dirty MAC, any rate pair within the union set of $C^\mathsf{S}\left(\left\{\vec{\Phi}_i\right\}\right)$ over all feasible $\{\vec{\Phi}_i\}$ can be achieved by flexibly designing the RIS reflection coefficients $\left\{\vec{\Phi}_i\right\}$. Thus, the capacity region of single dirty MAC under strong interference case and RIS deployment is defined as the convex hull of such a union set given in \eqref{union}. Now, exploiting the effective channel gain for each user $u_i$ given in \eqref{heq1}, and setting the phase shift as $\varphi_{im}=\theta_{im}+\psi_{im}$, we present the following theorem to determine the closed-form capacity region of single dirty MAC under RIS deployment.
\begin{theorem}
The capacity region of the RIS-aided single dirty MAC under strong interference scenario is given by
\begin{align}\label{region-sd}
C^\mathsf{S}=\left\{\left(R_1,R_2\right): R_2\leq R^{\mathsf{S}}_2, R_1+R_2\leq R^{\mathsf{S}}_{1,2}\right\},
\end{align}
where 
$
	R^{\mathsf{S}}_{2}\overset{\Delta}{=}\log_2 \left(1+\min\left\{\frac{P_1\tilde{h}_{1,\mathrm{V}}}{N},\frac{P_2\tilde{h}_{2,\mathrm{V}}}{N}\right\}\right) \text{and} \,
	R^{\mathsf{S}}_{1,2}=\log_2\left(1+\frac{P_1\tilde{h}_{1,\mathrm{V}}}{N}\right).
$
\end{theorem}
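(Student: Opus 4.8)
The plan is to reproduce, now with an extra sum-rate constraint, the two-sided argument used for Theorem~\ref{thm-dd}: first exhibit one feasible RIS configuration whose instantaneous region equals the claimed region $C^{\mathsf{S}}$, then show that $C^{\mathsf{S}}$ is a convex outer bound for $C^{\mathsf{S}}\left(\{\vec{\Phi}_i\}\right)$ for \emph{every} feasible $\{\vec{\Phi}_i\}$, and finally combine these two inclusions using the convexity of $C^{\mathsf{S}}$ to conclude equality in \eqref{region-sd}.

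For the achievability part I would take the phase-aligned configuration $\{\vec{\Phi}^{\star}_i\}$ given by $\varphi_{im}=\theta_{im}+\psi_{im}$ for all $i,m$; this respects $|\phi_{im}|=1$, hence $\{\vec{\Phi}^{\star}_i\}\in\mathcal{A}$. By \eqref{heq1}--\eqref{phi_i} it yields $|\tilde{h}_i(\vec{\Phi}^{\star}_i)|=\tilde{h}_{i,\mathrm{V}}$ for $i=1,2$ simultaneously, since the two RISs are configured independently. Substituting into the instantaneous constraints \eqref{rate1} and \eqref{rate2} (the latter with $R_1+R_2$ on the left, as intended) turns them into exactly the two inequalities defining $C^{\mathsf{S}}$ in \eqref{region-sd}; so $C^{\mathsf{S}}(\{\vec{\Phi}^{\star}_i\})=C^{\mathsf{S}}$, whence $C^{\mathsf{S}}\subseteq\bigcup_{\{\vec{\Phi}_i\}\in\mathcal{A}}C^{\mathsf{S}}(\{\vec{\Phi}_i\})\subseteq\text{Conv}\big(\bigcup_{\{\vec{\Phi}_i\}\in\mathcal{A}}C^{\mathsf{S}}(\{\vec{\Phi}_i\})\big)$.

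For the converse I would invoke the triangle inequality exactly as in step $(a)$ of \eqref{phi_i}: since the amplitudes $g_{im},h_{im}$ are nonnegative, for any $\{\vec{\Phi}_i\}\in\mathcal{A}$ we have $|\tilde{h}_i(\vec{\Phi}_i)|=\big|\hat{h}_i+\sum_{m}g_{im}h_{im}\mathrm{e}^{j(\varphi_{im}-\theta_{im}-\psi_{im})}\big|\le|\hat{h}_i|+\sum_{m}g_{im}h_{im}=\tilde{h}_{i,\mathrm{V}}$. Both right-hand sides of \eqref{rate1} and \eqref{rate2} are nondecreasing in $|\tilde{h}_1(\vec{\Phi}_1)|$ and $|\tilde{h}_2(\vec{\Phi}_2)|$, so $C^{\mathsf{S}}(\{\vec{\Phi}_i\})\subseteq C^{\mathsf{S}}$ for every feasible $\{\vec{\Phi}_i\}$ (under the same reading of $\tilde{h}_{i,\mathrm{V}}$ used in Theorem~\ref{thm-dd}). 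Since $C^{\mathsf{S}}$ in \eqref{region-sd} is the intersection of two half-planes it is convex, hence it also contains $\text{Conv}\big(\bigcup_{\{\vec{\Phi}_i\}\in\mathcal{A}}C^{\mathsf{S}}(\{\vec{\Phi}_i\})\big)$; together with the previous inclusion this gives the asserted equality.

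The step I expect to require the most care --- and the only genuine difference from Theorem~\ref{thm-dd} --- is that $C^{\mathsf{S}}(\{\vec{\Phi}_i\})$ is a quadrilateral rather than a triangle, so the argument must handle the $R_2$-bound \eqref{rate1} and the helper sum-rate bound \eqref{rate2} at the same time. One must check that a single configuration maximizes both right-hand sides: this holds because $\vec{\Phi}_1$ and $\vec{\Phi}_2$ enter the two bounds separately and each bound is monotone in the corresponding magnitude, so the common maximizer is precisely the phase-aligned $\{\vec{\Phi}^{\star}_i\}$ above; and one must note that the $\text{Conv}(\cdot)$ in \eqref{union} does not enlarge the union beyond $C^{\mathsf{S}}$, which is immediate once $C^{\mathsf{S}}$ is seen to be convex.
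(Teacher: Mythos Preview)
Your proposal is correct and follows the same approach as the paper: the paper's own proof of Theorem~2 is a one-line reference back to Theorem~\ref{thm-dd}, simply asserting that the region in \eqref{region-sd} is simultaneously achieved by the phase-aligned $\{\vec{\Phi}_i\}$ of \eqref{phi_i} and is a convex outer bound for every $C^{\mathsf{S}}(\{\vec{\Phi}_i\})$ in \eqref{rate1}--\eqref{rate2}. You have supplied exactly those two ingredients with more care, including the observation (which the paper leaves implicit) that the decoupling of $\vec{\Phi}_1$ and $\vec{\Phi}_2$ across the two constraints allows a single configuration to maximize both bounds at once.
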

\begin{proof}
Similar to the proof of Theorem \ref{thm-dd}, by noting the fact that the capacity region in \eqref{region-sd} provides a convex outer bound for all achievable $C^\mathsf{S}\left(\left\{\vec{\Phi}_i\right\}\right)$ in \eqref{rate1} and \eqref{rate2}, the proof is completed.\vspace{-0.5cm}
\end{proof}
\section{Outage Probability}
In this section, by exploiting the capacity regions that we obtained in Section \ref{sec-capacity}, we derive the closed-from expression of the OP for both doubly and single dirty MAC models under RIS deployment. 

OP is a typical performance measure for communication systems operating over fading channels and it is defined as the probability that the channel capacity is less than a certain information rate $R_t>0$: 
$P_o\overset{\Delta}{=}\Pr\left(C\leq R_t\right).$
In order to obtain the OP for the considered system models, we first need to characterize the signal-to-noise ratio (SNR) distributions. To this end, we assume that all the channels undergo Rayleigh fading and the channel gains follow the complex Gaussian distribution with zero mean and unit variance.  We also suppose that $\hat{d}_{i}$ defines the distance between user $u_i$ and the receiver $r$, $\bar{d}_i$ represents the distance between user $u_i$ and its serving RIS, and $\tilde{d}_i$ denotes the distance between the serving RIS to the $i$-th user and the receiver $r$. Thus, by considering the effective channel after setting ideal shift phase (i.e., $\tilde{h}_{i,\mathrm{V}}$), the instantaneous SNRs at the receiver $r$ from users $u_i$ can be expressed as:\vspace{-0.2cm}
\begin{align}
\hspace{-0.2cm}\gamma_i\approx\frac{P_i|\hat{h}_i|^2}{\hat{d}_i^{\hat{\alpha}}N}+\frac{P_i\Big|\sum_{m=1}^{M_i}g_{im}h_{im}\Big|^2}{\bar{d}_i^{\bar{\alpha}}\tilde{d}_i^{\tilde{\alpha}}N}=\hat{\gamma}_i|\hat{h}_i|^2+\tilde{\gamma}_iH_i^2,
\end{align}
where $\hat{\alpha}$, $\bar{\alpha}$, and $\tilde{\alpha}$ denote the corresponding path-loss exponents, and $\hat{\gamma}_i$ and $\tilde{\gamma}_i$ are the average SNRs associated to the direct and through-RIS paths, respectively. Since $g_{im}$ and $h_{im}$ are independent Rayleigh random variables (RVs), the mean value and the variance of their product are $\mathbb{E}(g_{im}h_{im})=\pi/4$ and $\text{Var}(g_{im}h_{im})=1-\pi^2/16$, respectively. According to the central limit theorem (CLT), $H_i$ converges to a Gaussian distributed RV with mean value $\mathbb{E}(H_i)=M_i\pi/4$ and variance $\text{Var}(H_i)=M_i\left(1-\pi^2/16\right)$ for a sufficiently large number of reflecting meta-surfaces, i.e., $M_i\gg 1$. Thus, $H_i^2$ is a non-central chi-square RV with one degree of freedom and has the probability density function (PDF) as 
$
f_{H_i^2}(h_i^2)=\frac{\lambda_i}{\sqrt[4]{h_i^2}}\mathrm{e}^{-\zeta_i h_i^2}I_{-\frac{1}{2}}\left(\sigma \sqrt{h_i^2}\right),
$
where $\lambda_i=\frac{4\sqrt{\pi M_i}\mathrm{e}^{-\frac{M_i\pi^2}{2\left(16-\pi^2\right)}}}{M_i(16-\pi^2)}$, $\zeta_i=\frac{8h_i^2}{M_i(16-\pi^2)}$, $\sigma=\frac{4\pi}{16-\pi^2}$, and $I_{\zeta}(.)$ denotes the modified Bessel function of the first kind \cite{proakis2001digital}. However, for ease of calculations in deriving the OP, we exploit the flexible approach provided in \cite{atapattu2011mixture} to express the PDF of $H_i^2$. Hence, $f_{H_i^2}(h_i^2)$ can be re-written as:\vspace{-0.2cm}
\begin{align}
f_{H_i^2}(h_i^2)=\sum_{l=1}^{L}\beta_{il}x^{\kappa_l-1}\mathrm{e}^{-\zeta_i h_i^2},\label{eq-pdf-H}
\end{align}
where $\beta_{il}=\frac{\lambda_i\left(\sigma/2\right)^{2l-5/2}}{(l-1)!\Gamma(l-1/2)}$, $\kappa_l=l-1/2$, $L$ is the number of terms, and $\Gamma(.)$ denotes the gamma function. So, the cumulative distribution function (CDF) of $\gamma_i$ can be defined as:
\begin{align}
&F_{\gamma_i}(\gamma_i)=\Pr\left(\hat{\gamma}_i|\hat{h}_i|^2+\tilde{\gamma}_iH_i^2\leq \gamma_i\right),\\
&=\int_{0}^{\infty}\int_{0}^{\frac{\gamma_i-\bar{\gamma}_ih_i^2}{\hat{h}_i}}f_{|\hat{h}_i|^2}(|\hat{h}_i|^2)f_{h_i^2}(h_i^2)d|\hat{h}_i|^2dh_i^2.\label{eq-int-cdf}
\end{align}
Now, by considering Rayleigh distribution for $|\hat{h}_i^2|$, inserting \eqref{eq-pdf-H} into \eqref{eq-int-cdf}, and computing the integral, $F_{\gamma_i}(\gamma_i)$ can be determined as \cite{yang2020secrecy}:
\begin{align}
F_{\gamma_i}(\gamma_i)=\sum_{l=1}^{L}\frac{\beta_{il}}{\zeta_i^{\kappa_l}}\gamma\left(\kappa_l,\frac{\gamma_i\zeta_i}{\tilde{\gamma}_i}\right)-\mathrm{e}^{-\frac{\gamma_i}{\hat{\gamma}_i}}\sum_{l=1}^{L}\frac{\beta_{il}\tilde{\gamma}_i^{\kappa_l}}{\omega_i^{\kappa_l}}\gamma\left(\kappa_l,\omega_i\gamma_i\right),\label{cdf}
\end{align}
where $\omega_i=\frac{\zeta_i\left(\hat{\gamma}_i-\tilde{\gamma}_i\right)}{\hat{\gamma}_i\tilde{\gamma}_i}$, and $\gamma(.,.)$ is the lower incomplete gamma function. By definition, the PDF of $\gamma_i$ can be achieved as:\vspace{-0.2cm}
\begin{align}
	f_{\gamma_i}(\gamma_i)=\sum_{l=1}^{L}\beta_{il}\left(\frac{\gamma_i}{\tilde{\gamma}_i}\right)^{\kappa_l-1}\left(\mathrm{e}^{-\frac{\zeta_i\gamma_i}{\tilde{\gamma}_i}}-\mathrm{e}^{-\omega_i\gamma_i}\right).
\end{align}\vspace{-0.8cm}
\subsection{Doubly Dirty MAC}
Now, by exploiting the capacity region that given in \eqref{region-dd}, we define the OP for doubly dirty MAC under RIS deployment as:\vspace{-0.2cm}
\begin{align}
&P_o^\mathsf{D}=\Pr\left(C^\mathsf{D}\leq R_t^{\mathsf{D}}\right),\\
&=\Pr\left(\log_2 \left(1+\min\left\{\gamma_1,\gamma_2\right\}\right)\leq R_t^{\mathsf{D}}\right),\\
&=1-\bar{F}_{\gamma_1}\left(\gamma_t^\mathsf{D}\right)\bar{F}_{\gamma_2}\left(\gamma_t^\mathsf{D}\right),\label{eq-out-a}
\end{align}
where $\bar{F}_{\gamma_i}(\gamma_t^\mathsf{D})=1-{F}_{\gamma_i}(\gamma_t^\mathsf{D})$ is the complementary CDF of $\gamma_i$ and $\gamma_t^\mathsf{D}=2^{R_t^\mathsf{D}}-1$. Consequently, by  plugging $F_{\gamma_i}(\gamma_t^\mathsf{D})$ from \eqref{cdf} into \eqref{eq-out-a}, the OP for doubly dirty MAC is determined as \eqref{eq-out-d}. \vspace{-0.5cm}
		\begin{figure*}[t]
	\normalsize
	\setcounter{equation}{23}
	\small\begin{align}
	P_o^\mathsf{D}=&\,1-\left(1-\sum_{l=1}^{L}\frac{\beta_{1l}}{\zeta_1^{\kappa_l}}\gamma\left(\kappa_l,\frac{\gamma_t^\mathsf{D}\zeta_1}{\tilde{\gamma}_1}\right)+\mathrm{e}^{-\frac{\gamma_t^\mathsf{D}}{\hat{\gamma}_1}}\sum_{l=1}^{L}\frac{\beta_{1l}\tilde{\gamma}_1^{\kappa_l}}{\omega_1^{\kappa_l}}\gamma\left(\kappa_l,\omega_1\gamma_t^\mathsf{D}\right)\right) \left(1-\sum_{l=1}^{L}\frac{\beta_{2l}}{\zeta_2^{\kappa_l}}\gamma\left(\kappa_l,\frac{\gamma_t^\mathsf{D}\zeta_2}{\tilde{\gamma}_2}\right)+\mathrm{e}^{-\frac{\gamma_t^\mathsf{D}}{\hat{\gamma}_2}}\sum_{l=1}^{L}\frac{\beta_{2l}\tilde{\gamma}_2^{\kappa_l}}{\omega_2^{\kappa_l}}\gamma\left(\kappa_l,\omega_2\gamma_t^\mathsf{D}\right)\right).\label{eq-out-d}
	\end{align}
	\hrulefill\vspace{-0.6cm}
\end{figure*}
\subsection{Single Dirty MAC}
Here, by considering the capacity region provided in \eqref{region-sd}, we express the OP for single dirty MAC under RIS deployment as:
\begin{align}
P_o^\mathsf{S}=\rho P_{o1}^\mathsf{S}+(1-\rho) P_{o2}^\mathsf{S},\label{eq-out-sd}
\end{align}
where $0\leq\rho\leq 1$ is the event probability. $P_{o1}^\mathsf{S}$ can be determined as:
\begin{align}
P_{o1}^\mathsf{S}&=\Pr\left(C^\mathsf{S}\leq R_2^\mathsf{S}\right),\\
&=\Pr\left(\log_2 \left(1+\min\left\{\gamma_1,\gamma_2\right\}\right)\leq R_2^{\mathsf{S}}\right),\\
&=1-\bar{F}_{\gamma_1}\left(\gamma_2^\mathsf{S}\right)\bar{F}_{\gamma_2}\left(\gamma_2^\mathsf{S}\right),\label{eq-out1-sd}
\end{align}
where $\gamma_2^\mathsf{S}=2^{R_2^\mathsf{S}}-1$. Now, by inserting $F_{\gamma_i}(\gamma_2^\mathsf{S})$ into \eqref{eq-out1-sd}, $P_{o1}^\mathsf{S}$ is given by \eqref{eq-out-s1}. 
\begin{figure*}[t]
	\normalsize
	\setcounter{equation}{28}
	\small\begin{align}
		P_{o1}^\mathsf{S}=&\,1-\left(1-\sum_{l=1}^{L}\frac{\beta_{1l}}{\zeta_1^{\kappa_l}}\gamma\left(\kappa_l,\frac{\gamma_2^\mathsf{S}\zeta_1}{\tilde{\gamma}_1}\right)+\mathrm{e}^{-\frac{\gamma_2^\mathsf{S}}{\hat{\gamma}_1}}\sum_{l=1}^{L}\frac{\beta_{1l}\tilde{\gamma}_1^{\kappa_l}}{\omega_1^{\kappa_l}}\gamma\left(\kappa_l,\omega_1\gamma_2^\mathsf{S}\right)\right) \left(1-\sum_{l=1}^{L}\frac{\beta_{2l}}{\zeta_2^{\kappa_l}}\gamma\left(\kappa_l,\frac{\gamma_2^\mathsf{S}\zeta_2}{\tilde{\gamma}_2}\right)+\mathrm{e}^{-\frac{\gamma_2^\mathsf{S}}{\hat{\gamma}_2}}\sum_{l=1}^{L}\frac{\beta_{2l}\tilde{\gamma}_2^{\kappa_l}}{\omega_2^{\kappa_l}}\gamma\left(\kappa_l,\omega_2\gamma_2^\mathsf{S}\right)\right).\label{eq-out-s1}
	\end{align}
	\hrulefill\vspace{-0.6cm}
\end{figure*}
The sum-rate OP $P_{o2}^\mathsf{S}$ can be also expressed as:
\begin{align}
&P_{o2}^\mathsf{S}=\Pr\left(C^\mathsf{S}\leq R_t^\mathsf{S}\right)=\Pr\left(\log_2\left(1+\gamma_1\right)\leq R_t^\mathsf{S}\right),\\
&=F_{\gamma_1}\left(\gamma_t^\mathsf{S}\right),\label{eq-s2}
\end{align}
where $\gamma_t^\mathsf{S}=2^{R_t^\mathsf{S}}-1$. By substituting \eqref{cdf} into \eqref{eq-s2}, $P_{o2}^\mathsf{S}$ is achieved as follows:
		\begin{align}
	P_{o2}^\mathsf{S}=\sum_{l=1}^{L}\frac{\beta_{1l}}{\zeta_1^{\kappa_l}}\gamma\left(\kappa_l,\frac{\gamma_t^\mathsf{S}\zeta_1}{\tilde{\gamma}_1}\right)-\mathrm{e}^{-\frac{\gamma_t^\mathsf{S}}{\hat{\gamma}_1}}\sum_{l=1}^{L}\frac{\beta_{1l}\tilde{\gamma}_1^{\kappa_l}}{\omega_1^{\kappa_l}}\gamma\left(\kappa_l,\omega_1\gamma_t^\mathsf{S}\right),\label{eq-out-s2}
\end{align}
Finally, by inserting $P_{o1}^\mathsf{S}$ and $P_{o2}^\mathsf{S}$ into \eqref{eq-out-sd}, $P_{o}^\mathsf{S}$ is obtained as \eqref{eq-out-s}. 
\begin{figure*}[t]
	\normalsize
	\setcounter{equation}{32}
	\small\begin{align}\nonumber
		\hspace{-0.15cm}P_{o}^\mathsf{S}=&\rho\Bigg[1-\left(1-\sum_{l=1}^{L}\frac{\beta_{1l}}{\zeta_1^{\kappa_l}}\gamma\left(\kappa_l,\frac{\gamma_2^\mathsf{S}\zeta_1}{\tilde{\gamma}_1}\right)+\mathrm{e}^{-\frac{\gamma_2^\mathsf{S}}{\hat{\gamma}_1}}\sum_{l=1}^{L}\frac{\beta_{1l}\tilde{\gamma}_1^{\kappa_l}}{\omega_1^{\kappa_l}}\gamma\left(\kappa_l,\omega_1\gamma_2^\mathsf{S}\right)\right) \left(1-\sum_{l=1}^{L}\frac{\beta_{2l}}{\zeta_2^{\kappa_l}}\gamma\left(\kappa_l,\frac{\gamma_2^\mathsf{S}\zeta_2}{\tilde{\gamma}_2}\right)+\mathrm{e}^{-\frac{\gamma_2^\mathsf{S}}{\hat{\gamma}_2}}\sum_{l=1}^{L}\frac{\beta_{2l}\tilde{\gamma}_2^{\kappa_l}}{\omega_2^{\kappa_l}}\gamma\left(\kappa_l,\omega_2\gamma_2^\mathsf{S}\right)\right)\Bigg]\\
		&+\left(1-\rho\right)\left[\sum_{l=1}^{L}\frac{\beta_{1l}}{\zeta_1^{\kappa_l}}\gamma\left(\kappa_l,\frac{\gamma_t^\mathsf{S}\zeta_1}{\tilde{\gamma}_1}\right)-\mathrm{e}^{-\frac{\gamma_t^\mathsf{S}}{\hat{\gamma}_1}}\sum_{l=1}^{L}\frac{\beta_{1l}\tilde{\gamma}_1^{\kappa_l}}{\omega_1^{\kappa_l}}\gamma\left(\kappa_l,\omega_1\gamma_t^\mathsf{S}\right)\right].\label{eq-out-s}
	\end{align}
	\hrulefill\vspace{-0.65cm}
\end{figure*}
\begin{figure*}[!t]
	\centering
	\hspace{-0.2cm}\subfigure[Doubly dirty MAC]{%
		\includegraphics[width=0.32\textwidth]{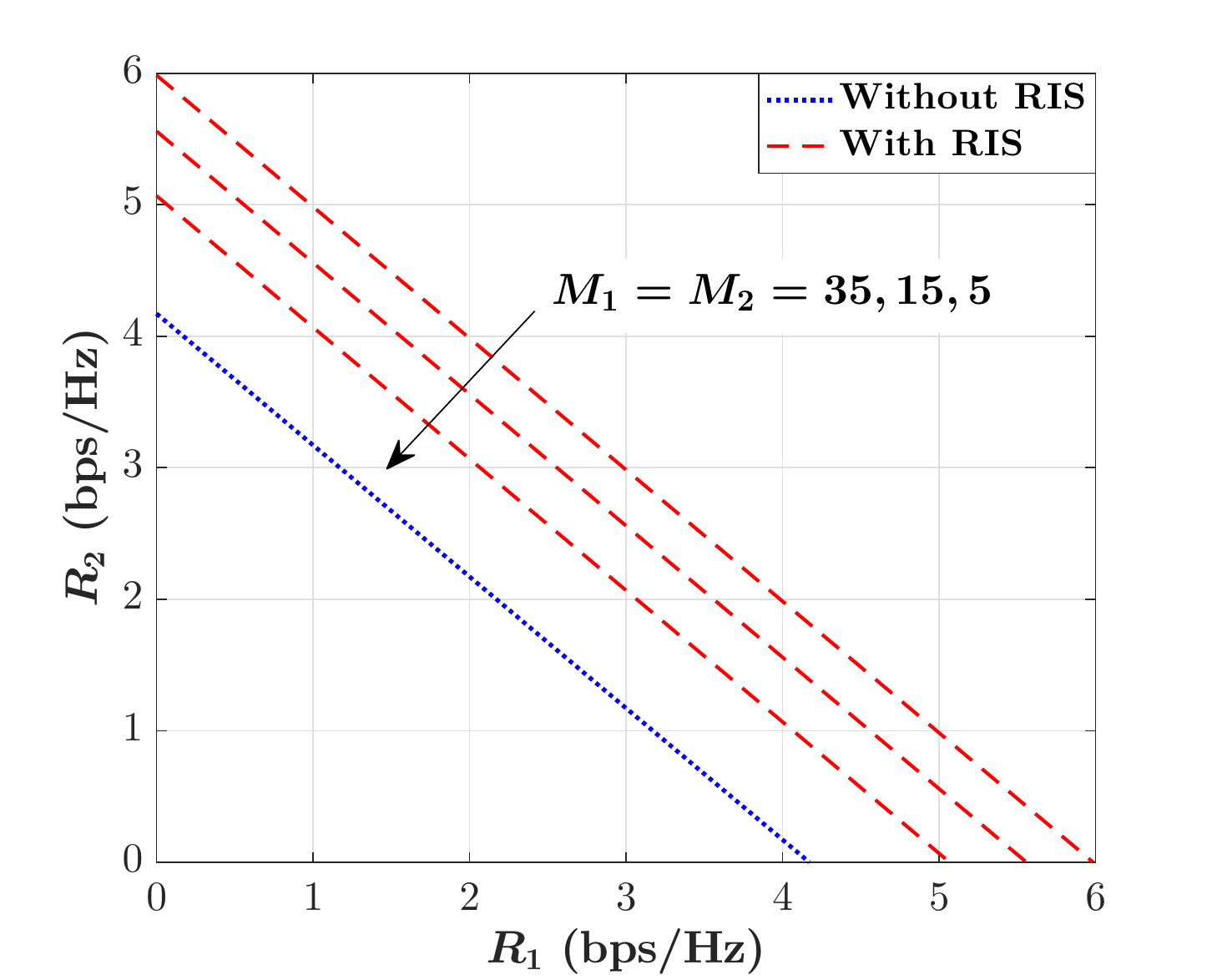}%
		\label{fig-cdd}%
	}\hspace{-0.3cm}
	\subfigure[Single dirty MAC, $P_1>P_2$]{%
		\includegraphics[width=0.32\textwidth]{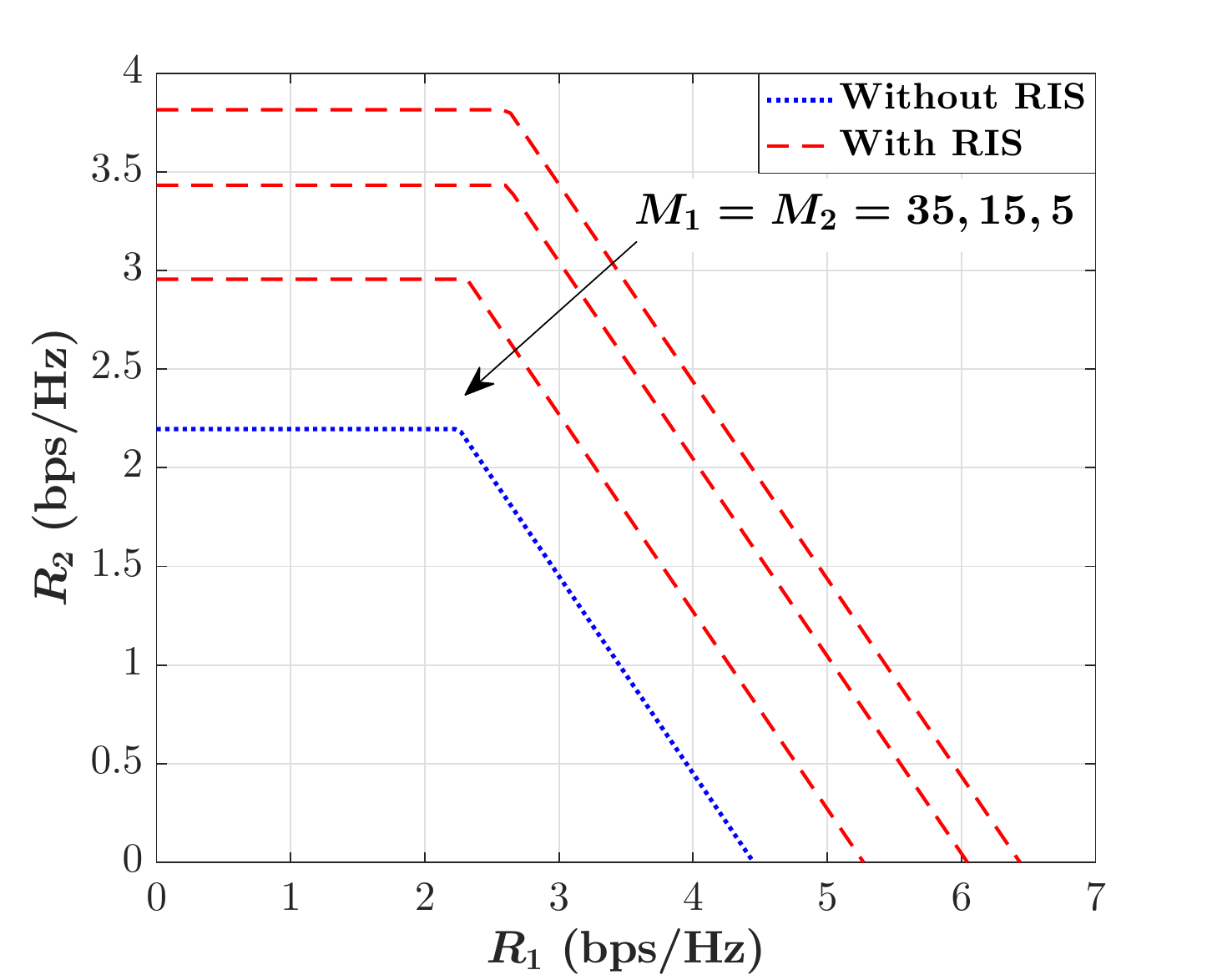}%
		\label{fig-csd1}%
	}\hspace{-0.3cm}
	\subfigure[Single dirty MAC, $P_1\leq P_2$]{%
		\includegraphics[width=0.32\textwidth]{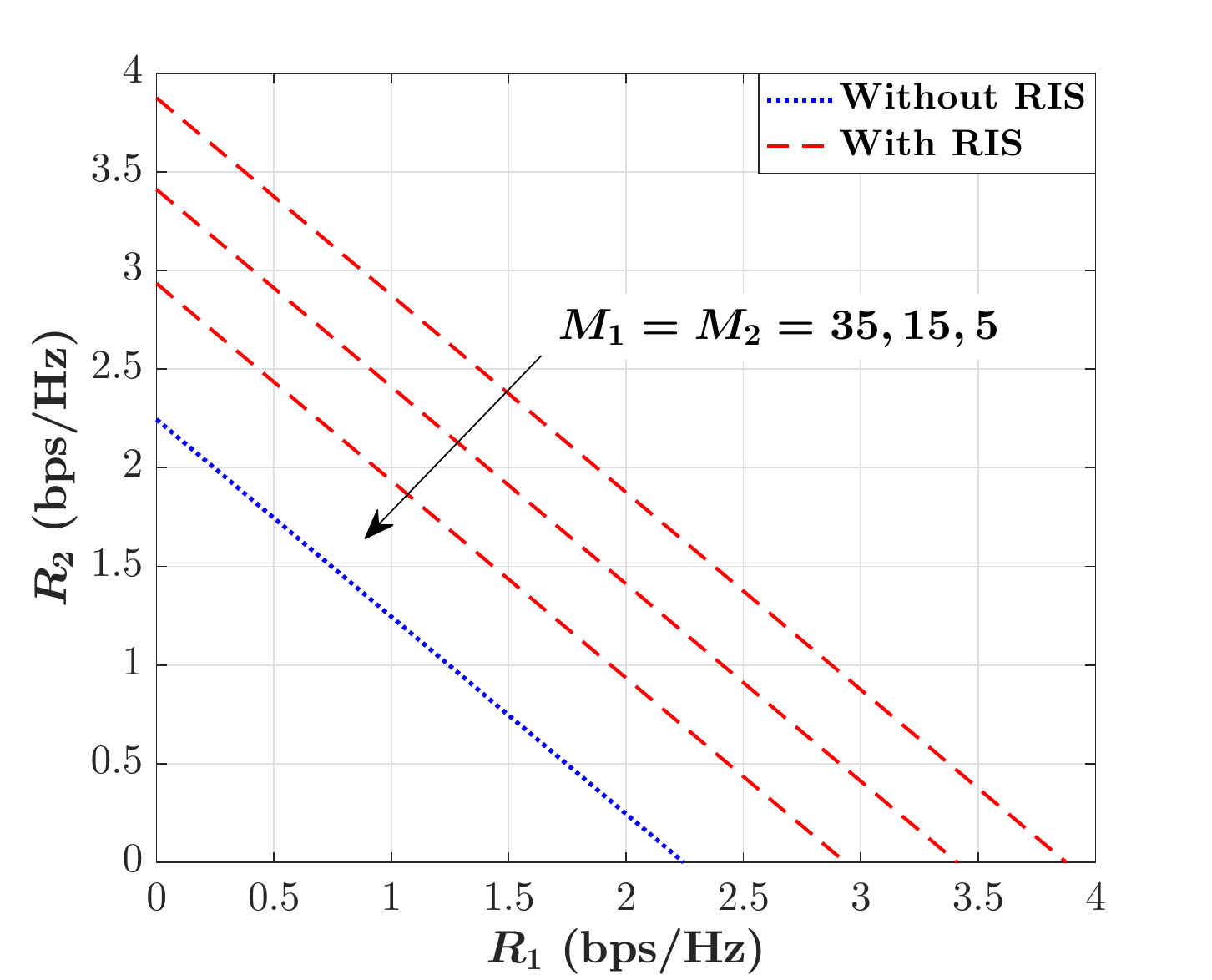}%
		\label{fig-csd2}%
	}\hspace{-0.3cm}
	\caption{Capacity region for doubly/single dirty MAC under RIS deployment when $N=-10$dBm, $\check{d}_1=\check{d}_2=20$m, and (a) $P_1=P_2=60$dBm; (b) $P_1=45$dBm, $P_2=60$dBm; (c) $P_1=60$dBm, $P_2=45$dBm.}\vspace{-0.6cm}
	\label{fig-capacity}
\end{figure*}	
\begin{figure*}[!t]
	\centering
	\hspace{-0.2cm}\subfigure[Doubly dirty MAC]{%
		\includegraphics[width=0.32\textwidth]{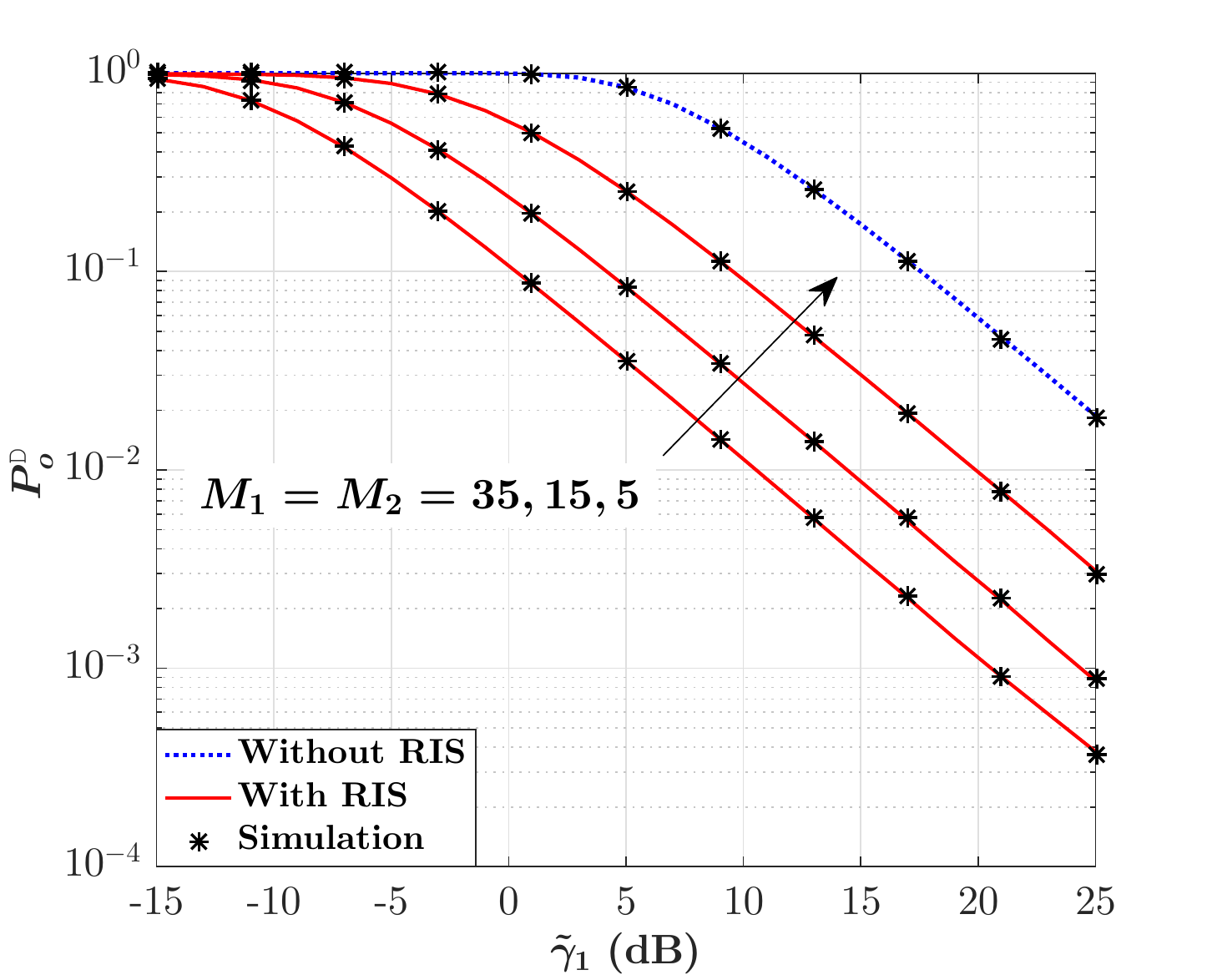}%
		\label{fig-outd}%
	}\hspace{-0.3cm}
	\subfigure[Single dirty MAC, $P_1>P_2$]{%
		\includegraphics[width=0.32\textwidth]{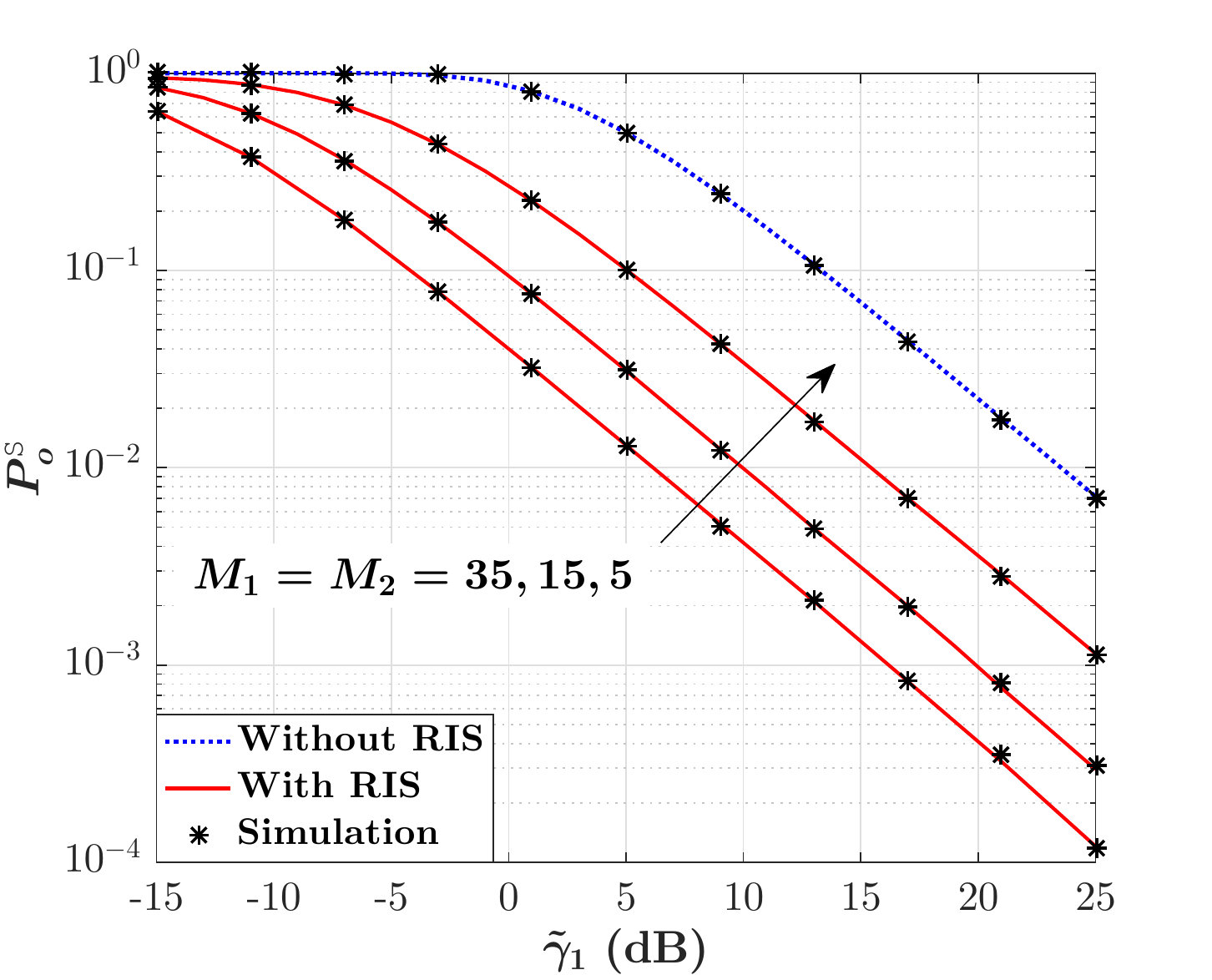}%
		\label{fig-outs1}%
	}\hspace{-0.3cm}
	\subfigure[Single dirty MAC, $P_1\leq P_2$]{%
		\includegraphics[width=0.32\textwidth]{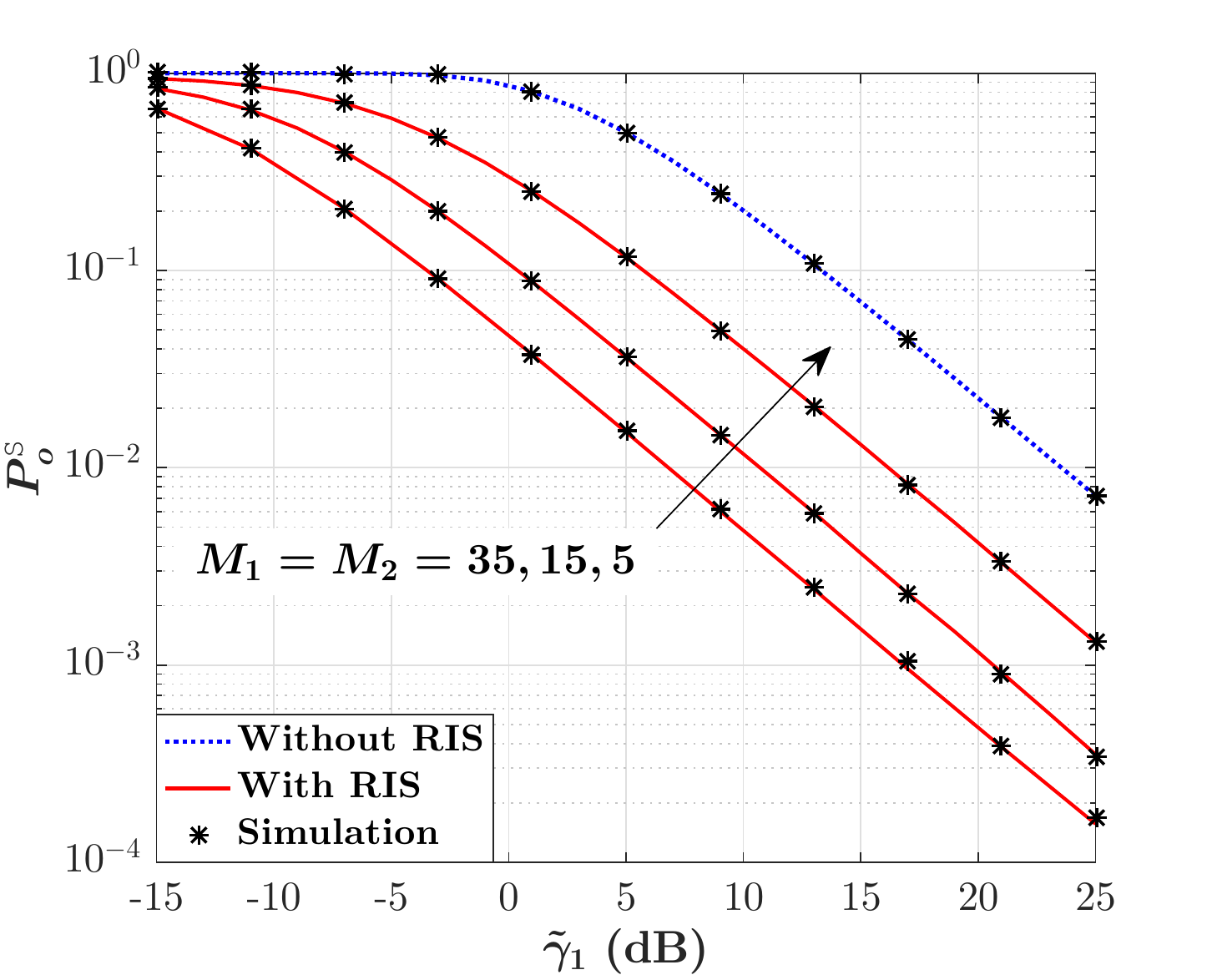}%
		\label{fig-outs2}%
	}\hspace{-0.3cm}
	\caption{OP versus average SNR $\tilde{\gamma}_1$ for doubly/single dirty MAC under RIS deployment when $\tilde{\gamma}_2=\tilde{\gamma}_1$, $R_t^\mathsf{D}=1$bps/Hz, $R_t^\mathsf{S}=1$bps/Hz, $R_2^\mathsf{S}=0.5$bps/Hz, $N=10$dBm, and (a) $P_1=P_2=50$dBm; (b) $P_1=50$dBm, $P_2=40$dBm; (c) $P_1=40$dBm, $P_2=50$dBm.}\vspace{-0.5cm}
	\label{fig-outage}
\end{figure*}	\vspace{-0.5cm}		
\section{Numerical Results}
In this section, we present the numerical results to validate the theoretical expressions previously
derived, which are double-checked in all instances with Monte
Carlo (MC) simulations. For this purpose, we first suppose that the common receiver $r$ is located at $\left(0,0,6\right)$ under a three-dimensional coordinate system. We also consider that users $u_1$ and $u_2$ are placed at $\left(\check{d}_1,0,1\right)$ and $\left(-\check{d}_2,0,1\right)$, and the corresponding RISs are located at the $\left(\check{d}_1,0,2\right)$ and $\left(-\check{d}_2,0,2\right)$, respectively, where $\check{d}_i$ denotes the horizontal distance between $i$-th user and the receiver $r$. We set the path-loss exponents as $\hat{\alpha}=3$ for the direct channel $\hat{h}_i$ from user $u_i$ to the receiver $r$, $\bar{\alpha}=3$ for the channels $\vec{h}_i$ from user $u_i$ to its serving RIS, and $\tilde{\alpha}=3.5$ fro the reflected channels from the respective RIS to the receiver $r$. Fig. \ref{fig-capacity} shows the capacity region of doubly/single dirty MAC under Rayleigh fading channels with and without RIS deployment. In all three curves, it can be seen that considering RIS in both models provides a larger capacity region as compared with no RIS deployment since the RIS can improve the channel quality when phase processing is used. We can also see that increasing the number of reflecting meta-surface elements $M_i$ improves the capacity region for both doubly and single dirty MAC models. Fig. \ref{fig-outage} illustrates the behavior of the OP in terms of the average SNR $\tilde{\gamma}_1$ for doubly/single dirty MAC with and without RIS deployment, where the channels suffer from Rayleigh fading. It is clearly seen that applying the RIS improves the system performance in terms of the OP for both doubly and single dirty MAC system models in all curves since the RIS provides a higher received SNR under phase processing usage. As expected, we also observe that as the number of meta-surface elements $M_i$ grows, the OP performance significantly improves.\vspace{-0.3cm}
\section{Conclusion}
We analyzed the capacity region of the two-user MAC in the presence of non-casually known SI at the transmitters under RIS deployment. For each of the investigated scenarios, namely doubly dirty MAC where both users know the interferences and single dirty MAC where interference is only known for one of the users, results show that considering RIS in dirty MAC has constructive effects on the capacity region and the OP performance, i.e., provides a larger capacity region and a lower OP.  \vspace{-0.3cm}

\bibliographystyle{IEEEtran}
\bibliography{sample.bib}
\end{document}